\newtheorem{theorem}{Theorem}
\newtheorem{lemma}{Lemma}
\title{Column Bound for Orthogonal Matrix Factorization}
\author{
  \begin{tabular}{cc}
    Anirudh Dash \\
    \small Department of Electrical Engineering  \\
    \small Indian Institute of Technology, Hyderabad \\
    \small Hyderabad, Telangana, India \\
    \small \texttt{ee21btech11002@iith.ac.in} 
  \end{tabular}
}
\date{}
\newcommand{\keywords}[1]{\textbf{Keywords:} #1}
\begin{document}

\maketitle

\begin{abstract}
    This article explores the intersection of the Coupon Collector's Problem and the Orthogonal Matrix Factorization (OMF) problem. Specifically, we derive bounds on the minimum number of columns \(p\) (in \(\mathbf{X}\)) required for the OMF problem to be tractable, using insights from the Coupon Collector's Problem. Specifically, we establish a theorem outlining
    the relationship between the sparsity of the matrix \(\mathbf{X}\) and the number of columns \(p\) required to recover the matrices \(\mathbf{V}\) and \(\mathbf{X}\) in the OMF problem. We show that the minimum number of columns \(p\) required is given by \(p = \Omega \left(\max \left\{ \frac{n}{1 - (1 - \theta)^n}, \frac{1}{\theta} \log n \right\} \right)\), where \(\theta\) is the i.i.d Bernoulli parameter from which the sparsity model of the matrix \(\mathbf{X}\) is derived.
\end{abstract}

\keywords{Coupon Collector's Problem, Orthogonal Matrix Factorization, Sparsity}

\section{Introduction}

The Coupon Collector's Problem \cite{mitzenmacher2005probability} is a well-studied problem in probability theory with applications in various fields such as computer science, statistics, and information theory. The problem is as follows: Suppose there are \(n\) different types of coupons, and each time a coupon is collected, it is independent and uniformly distributed among the \(n\) types. The problem is to find the expected number of coupons that need to be collected to obtain at least one of each type.
This classic problem has been analyzed extensively, and it is known that the expected number of coupons needed is given by \(n H_n\), where \(H_n\) is the \(n\)-th harmonic number, approximately \(\log n + \gamma\) (where \(\gamma\) is the Euler-Mascheroni constant and \(\log\) represents the natural logarithm) for large \(n\) \cite{mitzenmacher2005probability}.

The Orthogonal Matrix Factorization (OMF) problem is posed as follows: Given a matrix \(\mathbf{Y} \in \mathbb{R}^{n \times p}\), we wish to find matrices \(\mathbf{V} \in \mathbb{R}^{n \times n}\) and \(\mathbf{X} \in \mathbb{R}^{n \times p}\) such that \(\mathbf{Y} = \mathbf{V}\mathbf{X}\), where \(\mathbf{V}\) is an orthogonal matrix. To find the matrices \(\mathbf{V}\) and \(\mathbf{X}\), we need to solve the optimization problem:
\begin{equation}
    \min_{\mathbf{V}, \mathbf{X}} \|\mathbf{Y} - \mathbf{V}\mathbf{X}\|_F^2
\end{equation}
where \(\|\cdot\|_F\) denotes the Frobenius norm. The problem is non-convex and NP-hard. However, under certain conditions on the matrix \(\mathbf{Y}\), it is possible to find the matrices \(\mathbf{V}\) and \(\mathbf{X}\) using various algorithms. Here, we analyze the minimum number of columns \(p\) required for the above problem to be tractable.
We derive the bound on \(p\) using the Coupon Collector's Problem and the fact that to recover \(\mathbf{V}\) and \(\mathbf{X}\), there must exist
at least one linear combination of each of the columns of \(\mathbf{V}\) \cite{spielman2012exact}.

\section{Theorem}

\begin{theorem}
    Consider the matrices \(\mathbf{Y} \in \mathbb{R}^{n \times p}\), \(\mathbf{V} \in \mathbb{R}^{n \times n}\), \(\mathbf{X} \in \mathbb{R}^{n \times p}\) such that \(\mathbf{Y} = \mathbf{V}\mathbf{X}\), where \(\mathbf{V}\) is an orthogonal matrix. Then, if the value of \(p\) satisfies:
    \begin{equation}
        p = \Omega \left(\max \left\{ \frac{n}{1 - (1 - \theta)^n}, \frac{1}{\theta} \log n \right\} \right)
    \end{equation}
    all the rows of \(\mathbf{X}\) are non-zero with high probability, under the following assumptions:
    \begin{enumerate}
        \item The sparsity model of \(\mathbf{X}\) is drawn from an independent Bernoulli distribution with parameter \(\theta\), where \(\theta \in (0,1]\). Thus, for each \(\mathbf{X_{ij}}\), \(1 \leq i \leq n, 1 \leq j \leq p\), we have:
              \[
                  \mathbbm{1_{\mathbf{X_{ij} \neq 0}}} = B_{ij}, \quad B_{ij} \sim B(\theta)
              \]
        \item The entries \(\mathbf{X_{ij}}\) are arbitrary real numbers.
    \end{enumerate}
    \label{thm:1}
\end{theorem}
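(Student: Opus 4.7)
The plan is to derive the two terms of the lower bound separately and then combine them by taking the maximum. The second term, $(1/\theta)\log n$, will come from a direct union bound on the event that some row of $\mathbf{X}$ is all-zero, while the first term, $n/(1-(1-\theta)^n)$, will emerge from a coupon-collector-style expectation argument that treats each column of $\mathbf{X}$ as an i.i.d.\ trial.

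First I would handle the union-bound piece. For any fixed row $i$, independence of the Bernoulli indicators $B_{i1},\dots,B_{ip}$ gives $\Pr[\text{row } i \text{ is zero}] = (1-\theta)^p$. A union bound over the $n$ rows, combined with the standard inequality $\log(1-\theta)\le -\theta$, yields
\[
\Pr\bigl[\exists\, i:\ \text{row } i \text{ is zero}\bigr] \;\leq\; n(1-\theta)^p \;\leq\; n e^{-\theta p}.
\]
Requiring the right-hand side to vanish as $n\to\infty$ is equivalent to $p = \Omega((\log n)/\theta)$, which is the second term.

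Next I would handle the coupon-collector piece. Let $C_j$ be the indicator that column $j$ contains at least one non-zero entry, so that the $C_j$'s are i.i.d.\ Bernoulli with parameter $q := 1 - (1-\theta)^n$. Interpreting each column as a ``useful draw'' when $C_j = 1$, the total number of useful columns $\sum_j C_j$ has expectation $pq$. Because a row can be covered only by a useful column, a necessary condition for every row to be hit is to have at least $n$ useful columns. Requiring $pq \geq n$ forces $p = \Omega(n/(1-(1-\theta)^n))$, and a Chernoff bound on the binomial sum $\sum_j C_j$ upgrades this expectation-level requirement to a high-probability statement.

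Combining the two estimates via the maximum yields the claimed bound. The main conceptual obstacle is justifying the coupon-collector side of the argument, since having $\geq n$ useful columns is not by itself sufficient for ``every row non-zero''---useful columns may cover overlapping sets of rows. The union-bound argument, however, already supplies the genuinely sufficient w.h.p.\ coverage guarantee, so the role of the coupon-collector bound is to expose the underlying combinatorial requirement rather than to tighten the sufficient condition; taking the maximum of the two ensures that both the ``per-column usefulness'' scaling and the ``no-vanishing-row'' union bound are simultaneously satisfied.
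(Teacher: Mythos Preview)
Your union-bound derivation of the $(\log n)/\theta$ term is correct and considerably more elementary than the paper's route: the paper obtains \emph{both} terms from a single coupon-collector expectation $\sum_{k=1}^{n} 1/\bigl(1-(1-\theta)^{k}\bigr)$, extracting the $(\log n)/\theta$ scaling through digamma-function inequalities and a Taylor expansion of $\log(1-\theta)$. Since $\max\{\cdot,\cdot\}\geq (\log n)/\theta$, your union bound by itself already establishes the sufficiency claim in the theorem, and it has the added advantage of producing a direct high-probability statement rather than only an expectation bound.

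Your coupon-collector derivation of the $n/\bigl(1-(1-\theta)^n\bigr)$ term, however, contains a genuine gap. You assert that ``a necessary condition for every row to be hit is to have at least $n$ useful columns,'' but this is false: a single column whose $n$ entries are all nonzero is one useful column that already covers every row. Hence ``$\geq n$ useful columns'' is neither sufficient (as you yourself observe) nor necessary, and the inequality $pq\geq n$ does not follow from the coverage requirement. The paper obtains this term differently: it lower-bounds each summand of the expectation formula by its smallest value $1/\bigl(1-(1-\theta)^n\bigr)$ and sums the $n$ identical terms, making no claim about the count of nonzero columns, only about the expected waiting time to cover all rows.
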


Note that there is no explicit constraint on the value of \(\theta\). \( B_{ij} \sim B(\theta) \) denotes that an entry of \(\mathbf{X}\) is non-zero with probability \(\theta\) and
zero with probability \(1 - \theta\).

\newpage
\setcounter{section}{0}
\renewcommand{\thesection}{\Alph{section}}

\section{Appendix}
\subsection{Proof of Theorem \ref{thm:1}}

\begin{proof}
    \begin{lemma}
        The expected number of columns required to find at least \(1\) non-zero element in each row of the matrix \(\mathbf{X} \) is given by:
        \begin{equation}
            p = \sum_{k=0}^{n-1} \frac{1}{1 - \sum_{r=0}^{k} \binom{k}{r} (\theta)^r (1-\theta)^{n-r}}
        \end{equation}
    \end{lemma}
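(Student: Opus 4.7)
The plan is to reduce the question to the classical coupon collector's setup by treating each column of $\mathbf{X}$ as an i.i.d.\ trial and each row as a ``coupon type'': a row gets collected the first time a column has a non-zero entry in that row, and we want the expected number of columns until every row is collected.

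First I would decompose the total waiting time. Let $T$ be the number of columns needed to cover all rows, and let $T_k$ denote the number of columns drawn while exactly $k$ rows have been collected, so that $T = \sum_{k=0}^{n-1} T_k$. Conditioned on being in state $k$, the $n-k$ still-uncovered rows each receive an independent Bernoulli$(\theta)$ entry in the next column, so the probability that this column collects at least one new row is $1-(1-\theta)^{n-k}$. Hence $T_k$ is geometric with that parameter, $E[T_k] = 1/(1-(1-\theta)^{n-k})$, and linearity gives
\[
E[T] \;=\; \sum_{k=0}^{n-1} \frac{1}{1-(1-\theta)^{n-k}}.
\]

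Next I would recast the denominator into the form stated in the lemma. Writing $1^k = (\theta + (1-\theta))^k$ and applying the binomial theorem after multiplying by $(1-\theta)^{n-k}$ yields the identity
\[
(1-\theta)^{n-k} \;=\; \sum_{r=0}^{k} \binom{k}{r}\, \theta^r (1-\theta)^{n-r},
\]
and substituting this into the previous display reproduces the expression claimed in the lemma.

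The main subtlety, rather than a hard obstacle, is the decomposition step: because a single column can simultaneously cover several uncovered rows, the count of covered rows is a Markov chain that may skip states, so the rigorous identity is $E[T] = \sum_k P(\text{state }k\text{ is visited})/(1-(1-\theta)^{n-k})$ and the lemma's formula is strictly an upper bound. Since Theorem~\ref{thm:1} only needs an $\Omega$ bound (applying, e.g., Markov's inequality to $E[T]$), treating the process in the standard stage-by-stage coupon-collector fashion gives exactly the claimed expression and loses nothing in the downstream asymptotics.
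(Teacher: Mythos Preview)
Your argument follows the same coupon-collector route as the paper: both condition on having $k$ rows already covered, compute the probability that the next column fails to hit a new row, and sum the resulting geometric expectations over $k$. The paper obtains the failure probability $\sum_{r=0}^{k}\binom{k}{r}\theta^{r}(1-\theta)^{n-r}$ by explicitly enumerating cases (all-zero column; exactly $r$ nonzeros, all confined to the $k$ already-found rows), whereas you compute the success probability $1-(1-\theta)^{n-k}$ directly and then recover the stated form via the binomial identity---the very simplification the paper carries out immediately after the lemma. Your remark on state-skipping is a genuine refinement: since one column can cover several new rows at once, not every state $k$ need be visited, so the sum in the lemma is in fact an upper bound on $E[T]$ rather than an exact value; the paper does not flag this point.
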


    \begin{proof}
        In the Coupon Collector's Problem, we collect coupons independently. There are a total of \(n\) unique coupons. The expression for the expected number of coupons we need to collect to recover all coupons is:
        \begin{equation}
            \text{Expected coupons} (p) = \sum_{k=0}^{n-1} \frac{1}{1 - \frac{k}{n}}
        \end{equation}
        where \(k\) is the number of unique coupons that have already been found. The denominator is \(1 -\) the probability of finding a coupon that has already been found beforehand.
        The above scenario is closely related to our case. With every new column, we are trying to find one or more new row indices (i.e., these row indices have a non-zero entry). In order for us to
        be able to recover the matrices \(\mathbf{V}\) and \(\mathbf{X}\) with high probability, we need to ensure that all the rows of \(\mathbf{X}\) have at least one non-zero element. The difference in our case (from the
        standard Coupon Collector's Problem)
        is that we may find multiple new row indices with each column.

        In our case, say at some point, k unique rows have been identified (i.e., at some point up to the current column, there has been a non-zero element in each of these rows- to model this, define the set $S$ as the set of row indices that have already been identified, i.e.,  $S=\{i: \exists l(<j) \ s.t \ \mathbf{X_{il}}\neq 0 \}$
        and $\lvert S \rvert = k$.)

        The following cases all contribute to the non-uniqueness of row indices found in the next column ($j$):
        \begin{enumerate}
            \item \(\mathbf{X_{ij}}=0 \ \forall i\) (i.e., all elements of the $j^{th}$ column are 0)
            \item \(\mathbf{X_{ij}}\neq 0 \) for exactly $r(\neq 0)$ row indices belonging to $S$. Thus, exactly $r$ of the $k$ already found row indices non-zero, and all the other elemets in the $j^{th}$ column of $\mathbf{X}$ are 0. $r$ can be any number between $1$ and $k$. The number of ways to choose $r$ row indices from $k$ is $\binom{k}{r}$. For all such cases, as long as the $r$ indices are chosen from $S$, new row indices are not being identified.
        \end{enumerate}
        Thus, we have the expression for the expected number of columns required to find at least one non-zero element in each row of the matrix \(\mathbf{X}\).
    \end{proof}

    Now, we find a lower bound for this summation.
    We are trying to minimize the lower summation to maximize the denominator to find a lower bound on the overall summation. Thus, we get:

    $$
        p = \sum_{k=0}^{n-1} {1 \over 1-(1-\theta)^{n-k}\sum_{r=0}^{k}{{k \choose r}(\theta)^r(1-\theta)^{k-r}}}
    $$

    $$
        = \sum_{k=0}^{n-1} {1 \over 1-(1-\theta)^{n-k}}
    $$

    Thus
    $$
        p \geq \sum_{k=0}^{n-1} {1 \over 1-(1-\theta)^{n}}
    $$

    Since each term is independent of \(k\), we can simplify the above expression to:

    \begin{equation}
        p \geq {n \over 1-(1-\theta)^{n}} \label{1}
    \end{equation}

    This is the lowest possible bound irrespective of $\theta$.
    Another method to find the bound is to solve the summation and then find a minimum value (here, we will later assume the value of $\theta$ to be small): \\

    $$
        p=\sum_{k=0}^{n-1} {1 \over 1-(1-\theta)^{n-k}}
    $$

    This is equivalent to:

    $$
        p=\sum_{k=1}^{n} {1 \over 1-(1-\theta)^{k}}
    $$

    $$
        =n-{(\gamma+\psi^{(0)}(n+1)) \over \log (1-\theta)}
    $$

    where $\psi^{(0)}(n+1)$ is the q-digamma function. We can use the minimum value of the q-digamma function \cite{gordon1994stochastic} to get a lower bound on $p$:

    $$
        \psi^{(0)}(n+1)> \log(n+1)-{1 \over 2(n+1)} -{1 \over 12(n+1)^2}
    $$

    since $\log(1-\theta)$ is negative, the second term in the above sum is positive. We then use the minimum value of the q-digamma function. This gives:

    \begin{equation}
        p \geq n-{(\gamma+\log(n+1)-{1 \over 2(n+1)} -{1 \over 12(n+1)^2}) \over \log(1-\theta)}
    \end{equation}

    We can now use the Taylor series expansion for $\log(1+x)$
    $$
        \log(1+x)=x-{x^2 \over 2}+{x^3 \over 3}-{x^4 \over 4}+\cdots
    $$
    Thus,
    $$
        \log(1-\theta)=(-\theta)-{(-\theta)^2 \over 2}+{(-\theta)^3 \over 3}-{(-\theta)^4 \over 4}+\cdots
    $$
    This gives
    $$
        p \geq n-{(\gamma+log(n+1)-{1 \over 2(n+1)} -{1 \over 12(n+1)^2}) \over ((-\theta)-{(-\theta)^2 \over 2}+{(-\theta)^3 \over 3}-{(-\theta)^4 \over 4}+\cdots)}
    $$
    Solving further, we get:
    $$
        p \geq n+{(\gamma+log(n+1)-{1 \over 2(n+1)} -{1 \over 12(n+1)^2}) \over (\theta)(1+{(\theta) \over 2}+{(\theta)^2 \over 3}+{(\theta)^3 \over 4}+\cdots)}
    $$

    1+${(\theta) / 2}+{(\theta)^2 / 3}+{(\theta)^3 / 4}+\cdots \simeq 1$ for small values of $\theta$. Hence,

    $$
        p \geq n+{(\gamma+\log(n+1)-{1 \over 2(n+1)} -{1 \over 12(n+1)^2}) \over (\theta)}
    $$

    Furthermore, asymptotically, we can ignore $-{1 / 2(n+1)} -{1 / 12(n+1)^2}$, which gives us:

    $$
        p \geq n+{(\gamma+log(n+1)) \over (\theta)}
    $$

    Asymptotically, $\forall \theta=O({(\log n) / n})$, we get:

    $$
        p=\Omega \left({1 \over \theta} \log n \right)
    $$

    For $\theta$ as small as $O(1/n)$, we get $p = \Omega(n \log n)$. This matches the bound in \cite{spielman2012exact}.
\end{proof}
\end{document}